\documentclass[11pt]{article}

\usepackage{amssymb} 
\usepackage{amsmath}     
\usepackage{amsthm}
\usepackage{mathtools}
\usepackage{a4wide}
\usepackage{graphicx}
\usepackage{hyperref}
\usepackage{authblk}

\usepackage{enumerate}

\newtheorem{theorem}{Theorem}
\newtheorem{lemma}{Lemma}

\newtheorem{obs}{Observation}
\newcommand{\Rset}{\mathbb{R}}

\begin{document}

\title{Recoverable robust representatives selection problem under interval continuous budgeted uncertainty}

\author[1]{Marcel Jackiewicz}
\author[1]{Adam Kasperski\footnote{Corresponding author}}
\author[1]{Pawe{\l} Zieli\'nski}

\affil[1]{
Wroc{\l}aw  University of Science and Technology, Wroc{\l}aw, Poland\\
            \texttt{\{marcel.jackiewicz,adam.kasperski,pawel.zielinski\}@pwr.edu.pl}}

\date{}

\maketitle


 \begin{abstract}
In this paper, the recoverable robust representative selection problem is considered, where uncertain second-stage costs are modeled using interval uncertainty with a continuous budget. While the variant under a discrete uncertainty budget is known to be NP-hard, we show that transitioning to a continuous budget fundamentally alters the computational complexity landscape. Specifically, by exploiting the structural properties of the problem under the continuous budget model, we design a strongly polynomial-time algorithm for the general case. Furthermore, we propose an even more efficient strongly polynomial-time algorithm for an important special case.
\end{abstract}


\noindent \textbf{Keywords}: robust optimization, interval data, selection problem, uncertainty

\maketitle


\section{Problem formulation}
  In this paper, we investigate the following two-stage version of \emph{representatives selection problem}
 under interval uncertainty.
We are given $n$ sets of items, $\mathcal{T}_1, \dots, \mathcal{T}_n$, where each 
set $\mathcal{T}_i$ contains $m_i \geq 1$ items for $i \in [n]$. Let 
$N = \sum_{i \in [n]} m_i$ denote the total number of items and 
$m = \max_{i \in [n]} m_i$, where $[n] = \{1, \dots, n\}$.
   A feasible solution contains exactly one item from each set $\mathcal{T}_i$, 
$i \in [n]$. For each item $j \in \mathcal{T}_i$, we define a first-stage cost 
$C_{ij} \geq 0$, a \emph{nominal} second-stage cost $\underline{c}_{ij} \geq 0$, and 
a maximum deviation $\Delta_{ij} \geq 0$ of the second-stage cost from its 
nominal value. In what follows, each second-stage cost $c_{ij}$ belongs to the 
interval $[\underline{c}_{ij}, \underline{c}_{ij} + \Delta_{ij}] \subseteq \mathbb{R}_{+}$. 
We use $\pmb{C} \in \mathbb{R}_{+}^N$, $\underline{\mathbf{c}} \in \mathbb{R}_{+}^N$, 
and $\pmb{\Delta} \in \mathbb{R}_{+}^N$ to denote the vectors of the 
first-stage costs, nominal second-stage costs, and deviations, respectively. 

We introduce binary variables $x_{ij}$ for $i \in [n]$ and $j \in [m_i]$, 
where $x_{ij} = 1$ if item $j \in \mathcal{T}_i$ is selected in the first stage. Let
$$\mathbb{X} = \left\{\pmb{x} \in \{0,1\}^N : \sum_{j \in [m_i]} x_{ij} = 1, \; i \in [n]\right\}$$ 
be the set of first-stage feasible solutions. Accordingly, we introduce binary 
variables $y_{ij}$ for $i \in [n]$ and $j \in [m_i]$, where $y_{ij} = 1$ if 
item $j \in \mathcal{T}_i$ is selected in the second stage. The neighborhood 
of a first-stage solution $\pmb{x} \in \mathbb{X}$ is defined as
\begin{align}
\mathbb{X}(\pmb{x}, r) = \left\{\pmb{y} \in \{0,1\}^N : \right. 
& \sum_{j \in [m_i]} y_{ij} = 1, \quad i \in [n], \nonumber\\
& \left. \sum_{i \in [n]} \sum_{j \in [m_i]} x_{ij} y_{ij} \geq r \right\}, \label{neighb}
\end{align}
where $r \in \{0, \dots, n\}$ is a given parameter. Hence, $\mathbb{X}(\pmb{x}, r)$ 
contains all second-stage solutions that share at least $r$ selected items 
with $\pmb{x}$. In the existing literature on \emph{recoverable} and \emph{incremental 
optimization} (see, e.g.,~\cite{NO13, SAO09}), three types of neighborhoods 
are commonly considered, namely \emph{element-inclusion} (at most $k$ new items can 
be added to $\pmb{x}$), \emph{element-exclusion} (at most $k$ items can be excluded 
from~$\pmb{x}$), and \emph{element-symmetric difference} (the solutions~$\pmb{x}$ 
and~$\pmb{y}$ can differ by at most $k$ items). It is easy to see that for 
the representatives selection problem, all these neighborhoods are equivalent 
to $\mathbb{X}(\pmb{x}, r)$ if we set $r = n - k$ for  the element-inclusion and element-exclusion neighborhoods;
 $r = n - 2k$  for        the element-symmetric difference neighborhood.
Therefore, in the following, 
we employ the neighborhood $\mathbb{X}(\pmb{x}, r)$.

In this paper, we use the following \emph{interval continuous budgeted uncertainty set}~\cite{NO13}, 
which contains all possible values of the second-stage costs (second-stage cost scenarios):
$$\mathcal{U}(\Gamma)=\{(\underline{\pmb{c}}+\pmb{\delta})\in \Rset_{+}^N: \pmb{0}\leq \pmb{\delta}\leq \pmb{\Delta}, \|\pmb{\delta}\|_{1}\leq \Gamma \}.$$
Hence, each second-stage cost belongs to its uncertainty interval, and the sum of 
deviations $\delta_{ij}$ from the nominal second-stage costs $\underline{c}_{ij}$ 
is bounded by a given parameter $\Gamma\in \Rset_{+}$ called a \emph{budget}.
Clearly, if $\Gamma=0$, then there is 
only one nominal cost scenario, $\underline{\pmb{c}}$, in the second stage. If $\Gamma$ 
is sufficiently large, the uncertainty set reduces to the Cartesian product of the 
uncertainty intervals, and we arrive at the classical interval uncertainty (see, e.g.,~\cite{KY97}).

In this paper, we consider a \emph{recoverable robust model}, the concept of 
which was introduced in~\cite{B11, LLMS09}. In the first stage, when the item 
costs are known, a complete solution is selected. This solution can then be 
modified to some extent after a second-stage cost scenario is revealed. We thus 
formulate the following three-level optimization problem, called
\emph{Recoverable Robust Representatives Selection}:
\begin{equation}
\text{RR-RS: }\min_{\pmb{x}\in \mathbb{X}} \left( \pmb{C}^T\pmb{x} + \max_{\pmb{c}\in \mathcal{U}(\Gamma)} \min_{\pmb{y}\in \mathbb{X}(\pmb{x},r)} \pmb{c}^T\pmb{y}\right). \label{rrrs1} 
\end{equation}
In the subsequent sections, we construct  a strongly poly\-no\-mial-ti\-me algorithm for~\eqref{rrrs1}.
Observe that~\eqref{rrrs1} contains two nested inner problems. The first is the \emph{adversarial problem}, where an adversary aims to find, for a given first-stage solution $\pmb{x} \in \mathbb{X}$, a worst-case cost scenario $\pmb{c} \in \mathcal{U}(\Gamma)$ that maximizes the recourse cost over $\mathbb{X}(\pmb{x},r)$. The second inner problem, referred to as the \emph{incremental problem}, models the decision maker's reaction: for a fixed scenario $\pmb{c}$, the goal is to find an optimal second-stage solution $\pmb{y} \in \mathbb{X}(\pmb{x},r)$ that minimizes the recovery cost, thereby effectively adjusting the first-stage decision $\pmb{x}$ to the realized cost state.

Let us briefly describe the known results on various robust representatives selection models.  
The deterministic single-stage representatives selection problem can be solved by a trivial algorithm that for each $i\in [n]$ selects an item in $\mathcal{T}_i$ with minimum cost. The problem becomes more challenging if the item costs can be uncertain. Suppose that we are given $K$ cost scenarios and we seek a solution that minimizes the maximum cost over these scenarios. This single-stage robust (min-max) representatives selection problem becomes NP-hard for $K=2$, $m_i=2$ for each $i\in [n]$, and strongly NP-hard, when $K$ is part of the input~\cite{DK12,DW13}. The latter negative result has been strengthen in~\cite{KKZ15}. In particular the problem tuned out to be hard to approximate within $O(\log^{1-\epsilon} K)$ for any $\epsilon>0$ unless NP$\subseteq$DTIME$(n^{{\rm poly}(\log  n)})$.
Fortunately, when $K$ is a constant, the problem admits a fully polynomial-time approximation scheme (FPTAS)~\cite{DK12}, and an $O(\log K / \log \log K)$-approximation algorithm when $K$ is part of the input~\cite{KKZ15}.

In the existing literature, a two-stage variant of the problem has also been investigated. Suppose that a partial solution can be constructed in the first stage when the item costs are known, and completed in the second stage after a second-stage cost scenario is revealed. This problem has been examined in~\cite{GKZ22a} for various convex uncertainty sets that model the uncertain second-stage costs.
The recoverable robust representatives selection problem under the \emph{discrete interval budgeted uncertainty}, the case where the budget~$\Gamma$ is an integer and at most $\Gamma$ second-stage costs are allowed to deviate from their nominal values,  defined in~\cite{BS04} is NP-hard. This result has been established in~\cite{GLW22}.
The representatives selection problem is a special case of the shortest path problem in series-parallel digraphs, where the sets $\mathcal{T}_i$ for $i \in [n]$ can be viewed as parallel components connected  by a series composition. The recoverable robust shortest path problem in acyclic digraphs, under interval continuous budgeted uncertainty, is known to be strongly NP-hard even in layered digraphs~\cite{KZ25}. In this paper, we identify a special case of this problem that can be solved in strongly polynomial time.

\section{Mixed integer program for RR-RS}
In this section, we construct a mixed-integer programming (MIP) formulation for~(\ref{rrrs1}). Notice first that, for a fixed scenario $\pmb{c} \in \mathcal{U}(\Gamma)$, the inner incremental (``min'') problem in~(\ref{rrrs1}) can be expressed as:
\begin{align}
	\min\quad & \sum_{i\in [n]} \sum_{j\in [m_i]} c_{ij} y_{ij} \label{mod01}\\
	\text{s.t.}\quad&\sum_{j\in [m_i]} y_{ij}=1, & i\in [n] \label{mod03c}\\
	& \sum_{i\in [n]}\sum_{j\in [m_i]} x_{ij} y_{ij}\geq r, \label{mod03}\\
	& y_{ij}\in \{0,1\}, & i\in[n], j\in [m_i]. \label{mod02}
\end{align}
Furthermore, for a fixed first-stage solution $\pmb{x} \in \mathbb{X}$, the constraint matrix of (\ref{mod03c})--(\ref{mod03}) is totally unimodular (TUM). To see this, we can apply the Ghouila-Houri characterization~\cite{C63}: a matrix $\pmb{A}$ is TUM if and only if 
for each subset of rows $\mathcal{R}$ of $\pmb{A}$ 
there exists a choice of signs $\epsilon_i\in\{1,-1\}$, $i\in \mathcal{R}$, such that for each column $j$ of $\pmb{A}$, 
$\sum_{i\in\mathcal{R}} \epsilon_i a_{ij}\in\{-1,0,1\}$.
 In our case, the rows corresponding to~(\ref{mod03c}) can be assigned the sign $+1$, while the row corresponding to~(\ref{mod03}) can be assigned the sign $-1$. 
 Accordingly,
we can relax the variables $\pmb{y}\in\{0,1\}^N$ with $\pmb{y}\in [0,1]^N$ preserving integrality of~$\pmb{y}$
and the objective function value.  

  Let $\mathbb{X}'(\pmb{x},r)$ be such a relaxed $\mathbb{X}(\pmb{x},r)$ (we replace $\pmb{y}\in\{0,1\}^N$ with $\pmb{y}\in [0,1]^N$ in~(\ref{neighb})).
 By the von Neumann minimax theorem, we can interchange the inner maximization and minimization operators in~(\ref{rrrs1}), which yields the following equivalent reformulation of the recoverable robust representatives selection problem~(\ref{rrrs1}): 
\begin{align}
	\min_{\pmb{x}\in \Phi}\; & \pmb{C}^T\pmb{x}+ \min_{\pmb{y}\in \mathbb{X}'(\pmb{x},r)}\max_{\pmb{c}\in\mathcal{U}(\Gamma)}  \pmb{c}^T\pmb{y}.\label{rrrs2} 
\end{align}
For a fixed $\pmb{y}$, the inner maximization problem is a linear program of the following form:
\begin{align*}
	\max \quad&   \sum_{i\in [n]} \sum_{j\in [m_i]} (\underline{c}_{ij}+\delta_{ij}) y_{ij}\\
		\text{s.t.} \quad& 
		\sum_{i\in [n]} \sum_{j\in [m_i]} \delta_{ij} \leq \Gamma &\\
		& 0\leq \delta_{ij} \leq \Delta_{ij}, & i\in [n], j\in [m_i].
\end{align*}
Taking the dual to this inner problem and by applying the strong duality theorem to this inner problem and substituting the result back into~(\ref{rrrs2}), we obtain the final mixed integer programming  formulation for~(\ref{rrrs1}):
\begin{small}
\begin{align}
	\min\quad & \sum_{i\in [n]}\sum_{j\in [m_i]} (C_{ij} x_{ij}+ \underline{c}_{ij}y_{ij}) \nonumber \\
	& + \sum_{i\in [n]}\sum_{j\in [m_i]} \Delta_{ij} \alpha_{ij}+\beta\Gamma \label{mipf0} \\ 
	\text{s.t.} \quad & \sum_{j\in [m_i]} x_{ij}=1, && i\in [n] \\
	& \sum_{j\in [m_i]} y_{ij}=1, && i\in [n] \label{mipf001}\\
	& \sum_{i\in [n]} \sum_{j\in [m_i]} x_{ij} y_{ij} \geq r && \label{mipf02}\\
	&  0\leq y_{ij}\leq \beta+\alpha_{ij}, &&   i\in [n], j\in [m_i] \\
	& x_{ij}\in \{0,1\}, \ \alpha_{ij}\geq 0, &&i\in [n], j\in [m_i]\label{mipf1a}\\
	&\beta \geq 0.&&\label{mipf1} 
\end{align}
\end{small}
Observe that $\beta \in [0,1]$. 

\section{Properties of the adversarial problem}
\label{secadv}
Fix a first-stage solution $\pmb{x} \in \mathbb{X}$ in~(\ref{mipf0})--(\ref{mipf1}) and consider the resulting linear program that models the adversarial problem:
\begin{small}
\begin{align}
	\min \quad&   
	\sum_{i\in [n]}\sum_{j\in [m_i]} (C_{ij} x_{ij} + \underline{c}_{ij}y_{ij}) \nonumber\\
	         & + \sum_{i\in [n]}\sum_{j\in [m_i]} \Delta_{ij} \alpha_{ij} + \beta\Gamma \label{mipa0} \\ 
		\text{s.t.} \quad& \sum_{j\in [m_i]} y_{ij} = 1, && i\in [n] \label{mipa01}\\
		& \sum_{i\in [n]} \sum_{j\in [m_i]} x_{ij}y_{ij} \geq r && \label{mipa02}\\
		& y_{ij} \leq \alpha_{ij} + \beta, && i\in [n], j\in [m_i]\label{mipa03}\\
        & y_{ij},\alpha_{ij} \geq 0, && i\in [n], j\in [m_i]\\
		& \beta \geq 0. \label{mipa10}
\end{align}
\end{small}
We now provide a characterization of the vertex solutions of the LP problem (\ref{mipa0})--(\ref{mipa10}).
 This  characterization is an application of the Rank Lemma (see, e.g.,~\cite[Lemma 1.2.3]{LRS11}). 
\begin{lemma}
\label{lrank}
Let $(\pmb{y}, \pmb{\alpha}, \beta)$ be any vertex solution to LP~\eqref{mipa0}--\eqref{mipa10} such that $\beta>0$ and let
\begin{align*}
\mathcal{Y} &= \{y_{ij} > 0 \mid i \in [n], \, j \in [m_i]\}, \\
\mathcal{L} &= \{\alpha_{ij} > 0 \mid i \in [n], \, j \in [m_i]\}.
\end{align*}
Denote by  $W$ the set of all tight constraints at this vertex solution chosen among the cardinality constraint~\eqref{mipa02} and the bound constraints~\eqref{mipa03}.
Then there exists a subset of tight constraints $W' \subseteq W$ such that:
\begin{enumerate}[(i)]
    \item $|\mathcal{Y}| + |\mathcal{L}| + 1 = n + |W'|.$\label{lranki}
    \item The system $\mathcal{E}$ of $n + |W'|$ linear equations corresponding to the $n$ equality constraints~\eqref{mipa01} and the selected tight constraints in $W'$, restricted to the strictly positive variables of $(\pmb{y}, \pmb{\alpha}, \beta)$, is linearly independent. \label{lrankii}
    \item The positive components of $(\pmb{y}, \pmb{\alpha}, \beta)$ are the unique solution to the tight linear system $\mathcal{E}$. \label{lrankiii}
\end{enumerate}
\end{lemma}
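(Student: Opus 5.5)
We apply the Rank Lemma directly to the LP \eqref{mipa0}--\eqref{mipa10}, with $\pmb{x}$ fixed. The variable vector is $(\pmb{y},\pmb{\alpha},\beta)\in\mathbb{R}^{2N+1}$. Besides the nonnegativity bounds, the constraints are the $n$ equalities \eqref{mipa01}, the cardinality constraint \eqref{mipa02} and the $N$ inequalities \eqref{mipa03}.

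At a vertex, the Rank Lemma gives a family of $2N+1$ linearly independent constraints, all tight, drawn from \eqref{mipa01}, the tight members of $W$, and the tight nonnegativity constraints. The equalities \eqref{mipa01} may be assumed included: they are always tight, and any maximal linearly independent tight family can be extended to contain a basis of their span. These $n$ rows are themselves independent, since they involve disjoint sets of $y$-variables and each $\mathcal{T}_i$ is nonempty. We then extend them to a basis of the row space of all tight constraints. This uses every zero variable's nonnegativity row (the unit vectors $e_v$ for $v=0$), the rows \eqref{mipa01}, and a subset $W'\subseteq W$.

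Concretely, we start from the independent set consisting of the $n$ rows \eqref{mipa01} and the unit rows $e_v$ for all zero variables $v$. Independence holds because restricting to the positive coordinates separates the two groups: the $e_v$ are supported on zero coordinates, and the \eqref{mipa01} rows remain independent after restriction. This uses that each row of \eqref{mipa01} has some positive $y_{ij}$, which follows since $\sum_j y_{ij}=1$. We then greedily add rows of $W$ until the rank reaches $2N+1$, which is possible because the point is a vertex; call the added set $W'$. The number of zero variables is $2N+1-(|\mathcal{Y}|+|\mathcal{L}|+1)$, where the $+1$ accounts for $\beta>0$. Counting gives
\[
\bigl(2N+1-|\mathcal{Y}|-|\mathcal{L}|-1\bigr)+n+|W'|=2N+1,
\]
which is exactly (i).

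For (ii) we eliminate the zero variables. Modulo the unit rows $e_v$, each remaining row is equivalent to its restriction to the positive coordinates. The whole family has full rank $2N+1$ and the $e_v$ span exactly the zero coordinates. Hence the restricted rows, namely \eqref{mipa01} and $W'$ on the positive variables, form a family of $n+|W'|$ independent rows in a space of dimension $|\mathcal{Y}|+|\mathcal{L}|+1$, so the square system $\mathcal{E}$ is nonsingular. Part (iii) follows because all these constraints are tight: the positive components satisfy $\mathcal{E}$ with equality after the zero variables are substituted as $0$, and nonsingularity gives uniqueness.

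The main obstacle is the bookkeeping in the second step. We must guarantee that all $n$ equalities \eqref{mipa01} can be kept in the basis, and that the zero-variable bounds can all be included without losing independence. We handle this by building the basis in the stated order, equalities and zero bounds first and then $W$, and checking independence of the initial set via the restriction argument. The hypothesis $\beta>0$ is used only to count $\beta$ among the positive variables, which produces the $+1$ in (i).
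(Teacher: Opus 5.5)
Your proof is correct and follows the same route as the paper: apply the Rank Lemma, start from the $n$ independent equalities \eqref{mipa01}, greedily add tight constraints from $W$, and count positive variables against independent tight rows. You are more explicit than the paper about eliminating the zero variables, using the unit rows $e_v$ and restricting to positive coordinates; the paper absorbs that step into the positive-variable form of the Rank Lemma.
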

\begin{proof}
By the Rank Lemma, the positive components of the vertex solution $(\pmb{y}, \pmb{\alpha}, \beta)$ are uniquely characterized by any maximal set of linearly independent constraints that are tight with respect to this solution, where the cardinality of this set is exactly $|\mathcal{Y}| + |\mathcal{L}| + 1$. 
Clearly, the inequality $n + |W| \geq |\mathcal{Y}| + |\mathcal{L}| + 1$ must hold, as the number of tight constraints cannot be smaller than the number of positive variables. We construct the set $\mathcal{E}$ greedily, exploiting the properties of the vector matroid  (see, e.g.,~\cite[Section 13]{KV12}).

Initially, we include all $n$ linearly independent equality constraints from~\eqref{mipa01} into $\mathcal{E}$. We then iteratively add constraints from $W$ to $\mathcal{E}$, while preserving the linear independence of the system, until a maximal number of linearly independent tight constraints from $W$ is obtained. Note that if the cardinality constraint~\eqref{mipa02} is tight but linearly dependent on the already chosen constraints, it remains in $W \setminus W'$. Since $\mathcal{E}$ consists of the $n$ constraints from~\eqref{mipa01} and  $|W'|$ constraints from $W'$, we have $|\mathcal{E}| = n + |W'|$. Finally, by the Rank Lemma, the number of strictly positive variables must equal the number of linearly independent tight equations, which yields $|\mathcal{Y}| + |\mathcal{L}| + 1 = n + |W'|$, thereby completing the proof.
\end{proof}

The following lemma provides a structural characterization  of the vertex solutions of the LP problem (\ref{mipa0})--(\ref{mipa10})
and 
is crucial for constructing an algorithm for the problem~(\ref{rrrs1}):
\begin{lemma}
\label{lemvar}
Let $(\pmb{y}, \pmb{\alpha}, \beta)$ be any vertex solution of the LP problem (\ref{mipa0})--(\ref{mipa10}). Then the possible values of the variables $y_{ij}$ and $\beta$ are characterized as follows:
\begin{align}
    &y_{ij} \in \{0, \beta, 1-\ell\beta\},  \quad \ell \in \{0, \dots, m-1\}, \label{casey1}\\
    &\beta \in \left\{1, \frac{1}{2}, \dots, \frac{1}{m},0\right\} \quad \text{or} \quad \beta = \frac{r-l_2}{l_1-l_3},  \label{beta}
\end{align}
where $l_1, l_2 \in \{0, \dots, n\}$, $l_1 + l_2 \leq n$, $l_3 \in \{0, \dots,  l_2(m-1)\}$, $m=\max_{i\in[n]}m_i$
\end{lemma}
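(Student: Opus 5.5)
The plan is to read off the structure of a vertex solution from the tight system given by Lemma~\ref{lrank}, using also optimality of the vertex.

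\textbf{Trivial case and local structure.} If $\beta=0$, constraints~\eqref{mipa03} force $\alpha_{ij}\ge y_{ij}$. Total unimodularity of~\eqref{mipa01}--\eqref{mipa02} then gives $y_{ij}\in\{0,1\}$, which is the case $\ell=0$, and $\beta=0$ is in the list. So assume $\beta>0$. At an optimal vertex we may take $\alpha_{ij}=\max\{0,y_{ij}-\beta\}$, because $\Delta_{ij}\ge 0$; when $\Delta_{ij}=0$ we choose this canonical value, since a vertex with $\Delta_{ij}=0$ and $\alpha_{ij}$ not at this value can be moved without leaving the face. Hence:
\begin{itemize}
\item a variable with $\alpha_{ij}>0$ has its constraint~\eqref{mipa03} tight, so $\alpha_{ij}=y_{ij}-\beta$;
\item a variable with $0<y_{ij}<\beta$ has $\alpha_{ij}=0$ and~\eqref{mipa03} slack.
\end{itemize}
The positive variables in $\mathcal{L}$ and the tight rows of~\eqref{mipa03} with $\alpha_{ij}>0$ pair off one-to-one. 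Each $\alpha_{ij}$ occurs only in its own row, so eliminating these pairs from the system $\mathcal{E}$ of Lemma~\ref{lrank} keeps it square and independent. What remains involves the variables $y_{ij}$ and $\beta$. The remaining equations are: the $n$ equalities~\eqref{mipa01}, the rows $y_{ij}=\beta$ (tight with $\alpha_{ij}=0$), and possibly~\eqref{mipa02}. By Lemma~\ref{lrank}(\ref{lranki}) this reduced system has exactly $|\mathcal{Y}|+1$ independent equations.

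\textbf{Counting per group.} In group $i$, call a positive $y_{ij}$ \emph{free} if no equation $y_{ij}=\beta$ is imposed on it. Each group contributes one equation~\eqref{mipa01} plus one equation per non-free positive variable. So in total the number of free variables is $n+|W''|-|\mathcal{Y}|-\ldots$, which balances as follows:
\[\#\text{free} = n + [\eqref{mipa02}\in W'] - 1.\]
If~\eqref{mipa02} is not in $W'$, there are $n-1$ free variables. The determined variables, $\beta$ and everything expressed through it, must then be fixed by the group equations alone. Some group therefore has no free variable and all of its $k\le m_i$ positive entries equal $\beta$, so $k\beta=1$ and $\beta\in\{1,1/2,\dots,1/m\}$. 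Every other group has exactly one free variable; otherwise a group with two free variables would leave a direction of non-uniqueness. Its $\ell\le m-1$ non-free entries equal $\beta$, so the free entry is $1-\ell\beta$. This gives~\eqref{casey1}.

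\textbf{Main obstacle: \eqref{mipa02} in $W'$.} Now there are $n$ free variables, one per group, and $\beta$ is determined by the cardinality equation $\sum x_{ij}y_{ij}=r$. Each $y_{ij}$ is either $\beta$ or $1-\ell_i\beta$. In group $i$, the selected item $j$ with $x_{ij}=1$ contributes one of three things:
\begin{itemize}
\item $0$, if it is not positive;
\item $\beta$, if it is non-free;
\item $1-\ell_i\beta$, if it is free.
\end{itemize}
Let $l_1$ be the number of groups of the second kind, $l_2$ the number of the third kind, and $l_3=\sum\ell_i$ over the third kind, so $l_3\le l_2(m-1)$. The cardinality equation reads $l_1\beta+l_2-l_3\beta=r$. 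Hence $\beta=(r-l_2)/(l_1-l_3)$ with $l_1+l_2\le n$; the denominator is nonzero by the independence in Lemma~\ref{lrank}(\ref{lrankii}). The delicate part is justifying rigorously that independence and uniqueness force exactly one free variable per group in each case, and that the elimination of the $\alpha$ variables is legitimate when $\Delta_{ij}=0$. I would handle the first by exhibiting an explicit null-space direction: shift mass between two free variables of one group, or between groups, with compensating changes that keep~\eqref{mipa02} satisfied. Such a direction would contradict Lemma~\ref{lrank}(\ref{lrankiii}).
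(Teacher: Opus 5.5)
\textbf{Gap: the case where \eqref{mipa02} lies in $W'$.} Your count gives $n$ free variables in total, not one per group. The claim that independence forces one free variable per group is false in this case, so the null-space argument you plan cannot succeed.

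The count, together with the fact that at most one group can have all its positive entries fixed at $\beta$ (Observation~\ref{obsbeta}), still allows one group $i_1$ with two free entries $y_{i_1j_1},y_{i_1j_2}$ and one group $i_2$ with none. If $x_{i_1j_1}=1$ and $x_{i_1j_2}=0$, shifting mass between these two entries changes the left-hand side of \eqref{mipa02}, so it is not a null direction. The system is in fact nonsingular. Group $i_2$ gives $\ell_{i_2}\beta=1$, and the group-$i_1$ equation together with \eqref{mipa02} determines the two free entries.

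Such vertices exist, even nondegenerate ones. Take $n=4$, $r=1$, $\beta=1/4$ and $\alpha_{ij}=\max\{0,y_{ij}-\beta\}$. Let $\mathcal{T}_1$ have two items with $x_{11}=1$ and $y_{11}=y_{12}=1/2$. Let $\mathcal{T}_2$ have five items, the selected one at $0$ and the other four at $1/4$. Let $\mathcal{T}_3$ and $\mathcal{T}_4$ each have the selected item at $1/4$ and a second item at $3/4$. The $15$ positive variables are uniquely determined by the $4$ equalities \eqref{mipa01}, the tight constraint \eqref{mipa02} and the $10$ tight constraints \eqref{mipa03}. Hence $W'=W$ is forced, and group~1 has two free entries while group~2 has none.

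In this configuration $\beta$ comes from group $i_2$, not from the cardinality equation. Also, the free entries of group $i_1$ are not of the form $1-\ell_{i_1}\beta$ with $\ell_{i_1}$ the number of that group's entries fixed at $\beta$. In the example, $y_{11}=1/2=1-2\beta$, although no entry of $\mathcal{T}_1$ equals $\beta$.

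The missing idea is the paper's Case~2b.
\begin{enumerate}[(i)]
\item Substitute $\beta=1/\ell_{i_2}$ into the cardinality equation. This gives $y_{i_1j_1}=M\beta$, where $M=(r-l_2)\ell_{i_2}-(l_1-l_3)$ is a positive integer and $l_1,l_2,l_3$ are the contributions of the other groups.
\item Hence $y_{i_1j_1}=1-(\ell_{i_2}-M)\beta$ and $y_{i_1j_2}=1-(\ell_{i_1}+M)\beta$.
\item The bounds $0<y_{i_1j_1},y_{i_1j_2}<1$ put both coefficients in $\{1,\dots,m-1\}$.
\end{enumerate}
Your Case~1, the one-free-per-group subcase of Case~2, and the case $\beta=0$ agree with the paper.

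\textbf{Minor point.} You do not need optimality, or a ``canonical choice'' when $\Delta_{ij}=0$. The lemma concerns an arbitrary given vertex, where no such choice is available. It does not matter, because $\alpha_{ij}=\max\{0,y_{ij}-\beta\}$ holds automatically at every vertex. Indeed, an $\alpha_{ij}>0$ with slack \eqref{mipa03} could be perturbed by $\pm\varepsilon$.
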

\begin{proof}
Choose any vertex solution $(\pmb{y}, \pmb{\alpha}, \beta)$ to the linear program~\eqref{mipa0}--\eqref{mipa10}. 
Assume first that $1 \ge \beta > 0$. 
The positive components of $(\pmb{y}, \pmb{\alpha}, \beta)$ constitute the unique solution to the tight, square, and nonsingular linear system~$\mathcal{E}$ of size $(|\mathcal{Y}| + |\mathcal{L}| + 1) \times (n + |W'|)$, satisfying $|\mathcal{Y}| + |\mathcal{L}| + 1 = n + |W'|$, by Lemma~\ref{lrank}(\ref{lranki}),~(\ref{lrankii}),~(\ref{lrankiii}), 
where $W' \subseteq W$ is the subset of the tight constraints in $\mathcal{E}$, beyond the $n$ equality constraints~\eqref{mipa01}.
Below, we show how to explicitly determine the values of $(\pmb{y}, \beta)$ in this solution by analyzing the algebraic structure of the system~$\mathcal{E}$.
 
%
Each variable $\alpha_{ij} \in \mathcal{L}$ uniquely specifies a tight constraint of the form $y_{ij} = \alpha_{ij} + \beta$ in $W'$ (observe that $\alpha_{ij} > 0 \iff y_{ij} > \beta$). Let $W'_{\pmb{\alpha}} \subseteq W'$ denote the set of these tight constraints, so $|W'_{\pmb{\alpha}}| = |\mathcal{L}|$. The set of remaining tight bound constraints of the form $y_{ij} = \beta$ (where $\alpha_{ij} = 0$) in $W'$ is denoted by $W'_{\pmb{y}}$. In the next part of the proof, we  use the following observation: 
\begin{obs}
\label{obsbeta}
There is at most one equality constraint among~(\ref{mipa01}) in which all variables belonging to $\mathcal{Y}$ are explicitly fixed to $\beta$ by the bound constraints in $W'_{\pmb{y}}$. 
\end{obs}
Indeed, if there are two such equality constraints, then they form a linearly dependent or infeasible  system with the constraints in $W'_{\pmb{y}}$, which contradicts the definition of $\mathcal{E}$  (see Lemma~\ref{lrank}(\ref{lrankii}),~(\ref{lrankiii})). 

We consider two cases depending on the membership of the cardinality constraint~\eqref{mipa02} in $W'$.

\emph{Case 1: Constraint \eqref{mipa02} is not in $W'$.}
This implies that constraint~\eqref{mipa02} is either not tight or linearly dependent on other constraints in $\mathcal{E}$. In this case, $W'$ consists exclusively of tight constraints from~\eqref{mipa03}, so $W' = W'_{\pmb{\alpha}} \cup W'_{\pmb{y}}$ and $|W'| = |\mathcal{L}| + |W'_{\pmb{y}}|$. Substituting this into Lemma~\ref{lrank}(\ref{lranki}) yields:
\[
|\mathcal{Y}| + |\mathcal{L}| + 1 = n + |\mathcal{L}| + |W'_{\pmb{y}}| \implies |\mathcal{Y}| - |W'_{\pmb{y}}| = n - 1.
\]
This relation establishes that among the $|\mathcal{Y}|$ variables in $\mathcal{Y}$, exactly $|W'_{\pmb{y}}|$ variables are explicitly fixed to $\beta$ by the bound constraints in $W'_{\pmb{y}}$. Consequently, exactly $n - 1$ variables in $\mathcal{Y}$ are unfixed by the bound constraints in $W'_{\pmb{y}}$. We  now analyze the values of these $n-1$ variables and the variable $\beta$ across the $n$  equality constraints~\eqref{mipa01}:
\begin{enumerate}[(a)]
    \item\label{c1a} \emph{The value of $\beta$:} 
    Distributing $n - 1$ unfixed variables across $n$  equality constraints~\eqref{mipa01} implies 
    the existence of \emph{exactly one} constraint~$i'$ (see Observation~\ref{obsbeta}) in which all variables belonging to $\mathcal{Y}$ are explicitly fixed to $\beta$ by constraints in $W'_{\pmb{y}}$ (i.e., $y_{i'j} = \beta$ for all $j$ with $y_{i'j} \in \mathcal{Y}$). 
    Summing over the constraint~$i'$ in~\eqref{mipa01} yields:
    $\sum_{j \in [m_{i'}]} y_{i'j} = \ell_{i'} \beta = 1 \implies \beta = \frac{1}{\ell_{i'}}$,
    where $\ell_{i'} \in \{1, \dots, m_{i'}\}$ is the number of variables in the constraint~$i'$ fixed by the bound constraints in $W'_{\pmb{y}}$.
    Since $m_{i'} \le m$, this proves that $\beta \in \left\{1, \frac{1}{2}, \dots, \frac{1}{m}\right\}$, establishing~\eqref{beta}.

    \item \emph{The values of the remaining $n-1$ variables $y_{ij}$:} 
    Since constraint~$i'$ is the unique one with zero unfixed variables (see part~(\ref{c1a})), the remaining $n - 1$ unfixed variables are distributed across the $n - 1$ remaining constraints~(\ref{mipa01}). 
    Every such an equality constraint $i \neq i'$ contains \emph{exactly one} variable in $\mathcal{Y}$ unfixed by the bound constraints in $W'_{\pmb{y}}$. All other $\ell_i$ variables in constraint~$i$, belonging to $\mathcal{Y}$, are fixed by the bound constraints in $W'_{\pmb{y}}$ and equal $\beta$. Thus:
 $y_{ij} + \ell_i \beta = 1 \implies y_{ij} = 1 - \ell_i \beta, \quad \text{where } \ell_i \in \{0, \dots, m_i - 1\}$.
\end{enumerate}
Accordingly, every variable in $\mathcal{Y}$ satisfies $y_{ij} \in \{\beta, 1 - \ell_i \beta\}$ for some $\ell_i \in \{0, \dots, m_i - 1\}$,
$m_i\leq m$. Together with $y_{ij} = 0$ for $y_{ij} \notin \mathcal{Y}$, this completes the proof of~\eqref{casey1}.

\emph{Case 2: Constraint \eqref{mipa02} is in $W'$.}
In this case, constraint~\eqref{mipa02} is tight and belongs to $W'$, so $W' = \{\eqref{mipa02}\} \cup W'_{\pmb{\alpha}} \cup W'_{\pmb{y}}$ and $|W'| = 1 + |\mathcal{L}| + |W'_{\pmb{y}}|$. Substituting this into Lemma~\ref{lrank}(\ref{lranki}) gives:
\[
|\mathcal{Y}| + |\mathcal{L}| + 1 = n + 1 + |\mathcal{L}| + |W'_{\pmb{y}}| \implies |\mathcal{Y}| - |W'_{\pmb{y}}| = n.
\]
This implies that exactly $n$ variables in $\mathcal{Y}$ are not explicitly fixed to $\beta$ by the bound constraints in $W'_{\pmb{y}}$. 
We analyze the values of these $n$ variables and the variable $\beta$ across the $n$  equality constraints~\eqref{mipa01}
and the cardinality constraint~\eqref{mipa02}. We need to consider two subcases.

\emph{Case 2a:} Each  equality constraint~\eqref{mipa01}  contains \emph{exactly one} variable in $\mathcal{Y}$ unfixed by $W'_{\pmb{y}}$.
 Consequently, the structural characterization of $y_{ij}$ remains identical to Case~1: every variable in $\mathcal{Y}$ satisfies $y_{ij} \in \{\beta, 1 - \ell_i \beta\}$ for some $\ell_i \in \{0, \dots, m_i - 1\}$, $m_i\leq m$.

Constraint~\eqref{mipa02} is used to uniquely determine $\beta$. Substituting the parameterized forms of $y_{ij}$ into the tight constraint~\eqref{mipa02} (with fixed binary choices $x_{ij} \in \{0,1\}$) yields:
\begin{equation}
l_1\beta + \underbrace{(1-\upsilon_1 \beta) + (1-\upsilon_2 \beta) + \dots + (1-\upsilon_{l_2} \beta)}_{l_2 \;\text{times}} = r,\label{econ}
\end{equation}
where $l_1, l_2 \in \{0, \dots, n\}$, $l_1 + l_2 \leq n$, and $\upsilon_l \in \{0, \dots, m_l-1\}$ for $l \in [l_2]$. Equivalently, this can be written as $l_1\beta + l_2 - (\upsilon_1 + \upsilon_2 + \dots + \upsilon_{l_2})\beta = r$, which simplifies to:
$$(l_1 - l_3)\beta = r - l_2,$$
where $l_3 = \sum_{l=1}^{l_2} \upsilon_l \in \{0, \dots, l_2(m-1)\}$  and $l_1, l_2, l_3$ represent the contribution of the remaining 
 constraints to the cardinality constraint.
 If $l_1 = l_3$, then the system $\mathcal{E}$ is either infeasible (if $r-l_2 \neq 0$) or linearly dependent (if $r-l_2 = 0$)
 --  see Lemma~\ref{lrank}(\ref{lrankii}),~(\ref{lrankiii}). Hence, we obtain:
$$\beta = \frac{r - l_2}{l_1 - l_3},$$
and thus \eqref{beta} is proven.

\emph{Case 2b:}  Not every equality constraint~\eqref{mipa01} contains exactly one variable in $\mathcal{Y}$ unfixed by $W'_{\pmb{y}}$.
By Observation~\ref{obsbeta}, there must exist \emph{exactly one} constraint $i_1 \in [n]$ with two unfixed variables, say $y_{i_1 j_1}, y_{i_1 j_2} \in \mathcal{Y}$, and \emph{exactly one} constraint $i_2 \in [n]$ with zero unfixed variables.
The characterization of the remaining $n-2$ unfixed variables   is  the same as in Case~1 due to the fact that each of these variables
appears in exactly one equality constraint). Now they become fixed.
Accordingly, 
the relevant system of tight constraints reduces to:
\begin{align}
    & y_{i_1 j_1} + y_{i_1 j_2} = 1 - \ell_{i_1}\beta, \label{rsys1}\\
    & \ell_{i_2}\beta = 1, \label{rsys2} \\
    & x_{i_1 j_1} y_{i_1 j_1} + x_{i_1 j_2} y_{i_1 j_2} = r - l_2 - (l_1 - l_3)\beta, \label{rsys3}
\end{align}
where $\ell_{i_2} \in \{1, \dots, m_{i_2}\}$, $\ell_{i_1} \in \{0, \dots, m_{i_1}-2\}$,
 $l_3=\sum_{l\in [l_2]\,:l\not=  j_{1}, j_{2}}\upsilon_l$, $\ell_{i_1}$ and $\ell_{i_2}$ stand for
 the numbers of fixed variables by $W'_{\pmb{y}}$ in constraints~$i_1$ and $i_2$, respectively. 
 Here $l_1, l_2, l_3$ represent the contribution of the rest
 constraints to the cardinality constraint (see~\eqref{econ}).
 By the linear independence of the system~$\mathcal{E}$, the columns corresponding to $y_{i_1 j_1}$ and $y_{i_1 j_2}$ must be linearly independent, which implies $x_{i_1 j_1} \neq x_{i_1 j_2}$. Without loss of generality, assume that $x_{i_1 j_1} = 1$ and $x_{i_1 j_2} = 0$. This yields:
\begin{align}
   & y_{i_1 j_2} = 1 - \ell_{i_1}\beta - y_{i_1 j_1}, \label{yij2}\\
   &  \beta = \frac{1}{\ell_{i_2}}, \label{betazg} \\
    & y_{i_1 j_1} = r - l_2 - (l_1 - l_3)\beta. \label{yij1}
\end{align}
Since $1 = \ell_{i_2}\beta$, substituting this into \eqref{yij1} yields:
\begin{equation}
y_{i_1 j_1} = (r - l_2)\ell_{i_2}\beta - (l_1 - l_3)\beta = M\beta,\label{ey11}
\end{equation}
where $M = (r - l_2)\ell_{i_2} - (l_1 - l_3)$ is a \emph{positive integer}.
Substituting \eqref{ey11} into \eqref{yij2} and using the equation~(\ref{betazg}) in~(\ref{ey11}) gives:
\begin{eqnarray}
y_{i_1 j_2} = 1 - (\ell_{i_1} + M)\beta.\label{ey12} \\
y_{i_1 j_1} = 1 - (\ell_{i_2} - M)\beta\label{ey13}
\end{eqnarray}
where $\ell_{i_1}+M\in \mathbb{Z}$ and $\ell_{i_2}-M\in \mathbb{Z}$.
Since $y_{i_1 j_2} \in (0,1)$, $1 \le \ell_{i_1} + M \le \ell_{i_2} - 1$, where the second inequality results from $1-(\ell_{i_1}+M)\beta=1-(\ell_{i_1}+M)\frac{1}{\ell_{i_2}}>0$ and $\ell_{i_2}>0$.
Accordingly, because $y_{i_1 j_1} \in (0,1)$, and $M$ is a positive integer,  we get $1 \le \ell_{i_2} - M \le \ell_{i_2} - 1$. As $\ell_{i_2}\leq m_{i_2}\leq m$, both $y_{i_1 j_1}$ and $y_{i_1 j_2}$ are of the form~(\ref{casey1}).
This completes the proof of \eqref{casey1} and \eqref{beta} for $1\geq \beta > 0$.

It remains to consider the case where $\beta = 0$. Observe that in any vertex solution, we then have $\alpha_{ij} = y_{ij}$, and thus the variables $\alpha_{ij}$, together with the constraints \eqref{mipa03}, can be eliminated. The constraint matrix of the remaining model (with fixed $x_{ij} \in \{0,1\}$) is totally unimodular (TUM). Therefore, in any vertex solution, $y_{ij} \in \{0,1\}$, which is entirely encompassed by the characterization in \eqref{casey1}.
\end{proof}

\section{Polynomial-time  algorithm for RR-RS}

We are now ready to present a strongly polynomial-time algorithm for solving the model \eqref{mipf0}--\eqref{mipf1} and, consequently, the problem \eqref{rrrs1}. The core idea of the algorithm is to solve the inner model \eqref{mipf0}--\eqref{mipf1a} for a fixed $\beta \in [0,1]$ and retrieve an optimal solution to model \eqref{mipf0}--\eqref{mipf1} corresponding to a value of $\beta$ that minimizes the objective~\eqref{mipf0}. By Lemma~\ref{lemvar}, we can restrict our attention to the discrete set of candidate values given in \eqref{beta}. Let $\mathcal{B}$ denote this set of candidate values for $\beta$.

Fix $\beta \in \mathcal{B}$. Observe that, in the absence of the cardinality constraint \eqref{mipf02}, the inner model \eqref{mipf0}--\eqref{mipf1a} for a fixed $\beta$ decomposes into $n$ independent subproblems, one for each set $\mathcal{T}_i$, $i \in [n]$. Constraint \eqref{mipf02} couples these subproblems. Thus, to solve \eqref{mipf0}--\eqref{mipf1a}, we need to enumerate all possible cases for constraint \eqref{mipf02} and combine them by solving the shortest path problem on a suitably constructed layered digraph. In the following, we provide the complete details.

For a given set $\mathcal{T}_i$, $i \in [n]$, and item $p \in \mathcal{T}_i$, Lemma~\ref{lemvar} implies that it suffices to consider the assignment $\gamma \in \{0, \beta, 1-\ell\beta\}$, where $\ell \in \{0, \dots, m-1\}$. Under these conditions, we formulate the following subproblem:
 \begin{align}
d_{ip}^\gamma=\min\quad&   
	\sum_{j\in [m_i]}  \underline{c}_{ij}y_{ij}\nonumber \\
	 &+ \sum_{j\in [m_i]} \Delta_{ij} \alpha_{ij}  \label{mipb0} \\ 
		\text{s.t.} \quad& \sum_{j\in [m_i]} y_{ij} = 1 \label{mipb01}\\
		& y_{ij} \leq \alpha_{ij} + \beta, &&  j\in [m_i]\label{mipb2}\\
        & y_{ij},\alpha_{ij} \geq 0, && j\in [m_i] \\
        & y_{ip}=\gamma.  \label{mipb4}     
\end{align}
The LP model \eqref{mipb0}--\eqref{mipb4} optimally allocates values to the variables $y_{ij}$ in the set $\mathcal{T}_i$, provided that $y_{ip}=\gamma$, where $d_{ip}^\gamma$ is the cost of this optimal allocation. 
Notice that in any optimal solution to \eqref{mipb0}--\eqref{mipb4}, since $\Delta_{ij} \geq 0$ and the objective function is minimized, the variable $\alpha_{ij}$ will take the smallest possible value satisfying \eqref{mipb2}. Thus, $\alpha_{ij} = \max\{0, y_{ij} - \beta\}$. To linearize this, we apply the variable splitting technique by substituting $y_{ij} = \underline{y}_{ij} + \overline{y}_{ij}$ for each $j \in [m_i]$. Here, $\underline{y}_{ij} \in [0, \beta]$ represents the baseline portion of $y_{ij}$ associated with the nominal cost $\underline{c}_{ij}$, while $\overline{y}_{ij} \geq 0$ explicitly represents the excess $\alpha_{ij}$ that incurs the additional penalty $\Delta_{ij}$. Furthermore, since $\sum_{j\in[m_i]} y_{ij} = 1$, we obtain the upper bound $\overline{y}_{ij} \leq 1 - \beta$. 
This yields the following equivalent reformulation for the subproblem \eqref{mipb0}--\eqref{mipb4}:
\begin{align}
d_{ip}^\gamma=\min \quad &   
	\sum_{j\in [m_i]} \underline{c}_{ij}\underline{y}_{ij} + \sum_{j\in [m_i]} (\underline{c}_{ij}+\Delta_{ij}) \overline{y}_{ij}  \label{mipc0} \\ 
		\text{s.t.} \quad& \sum_{j\in [m_i]\setminus\{p\}} (\underline{y}_{ij}+\overline{y}_{ij}) = 1-\gamma \label{mipc1}\\
		& 0\leq \underline{y}_{ij} \leq \beta, \quad  j\in [m_i] \label{mipc2}\\
		& 0\leq \overline{y}_{ij} \leq 1-\beta, \quad j\in [m_i]. \label{mipc3}
\end{align}
Note that after fixing the variable $ y_{ip}= \underline{y}_{ip}+\overline{y}_{ip}=\gamma$, the model \eqref{mipc0}--\eqref{mipc3} is a continuous knapsack problem that can be solved in $O(m_i)$ (see, e.g., \cite[Corollary 17.5]{KV12}). 
Let us define:
\begin{equation}
f_i^\gamma = \min_{p\in [m_i]} \{C_{ip} + d_{ip}^\gamma\} \text{ and }
j_i^\gamma \in \arg\min_{p\in [m_i]} \{C_{ip} + d_{ip}^\gamma\},\label{fcost}
\end{equation}
where $\gamma \in \{0, \beta, 1-\ell\beta\}$ and $\ell \in \{0, \dots, m-1\}$. The value $f_i^\gamma$ represents the minimum cost of allocating the value $\gamma$ to one of the items in $\mathcal{T}_i$, and $j_i^\gamma \in [m_i]$ is the index of the item for which this minimum is attained. 
Note that computing this minimum corresponds exactly to determining the optimal selection of the binary variable $x_{i j_i^\gamma} = 1$ within the set $\mathcal{T}_i$, under the assumption that its associated continuous variable takes the value $y_{i j_i^\gamma} = \gamma$.
To evaluate the values $f_i^\gamma$ for all $p \in [m_i]$ and 
 $\gamma \in \{0, \beta, 1-\ell\beta\}$, $\ell \in \{0, \dots, m-1\}$, 
one can avoid solving the continuous knapsack instances naively in $O(m^3)$. By exploiting the monotonicity of the residual capacity $1 - \gamma$, the computation for 
set~$\mathcal{T}_i$ can be efficiently carried out using a two-pointer sliding window technique combined with precomputed prefix sums. This approach reduces the overall time complexity for $\mathcal{T}_i$ to~$O(m^2)$.

We now construct a layered digraph $G^{\beta}=(V,A)$. The node set $V$ is partitioned into $n+2$ layers, denoted as $V_0, \ldots, V_{n+1}$. Each node in layer $V_{i}$, for $i\in [n]$, represents a state defined by the accumulated value of the left-hand side of the cardinality constraint \eqref{mipf02} after $i$ decisions.

Layer $V_0$ contains only the source node $s$, initialized with a label of $0$. By Lemma~\ref{lemvar}, 
the  assignment~$\gamma_i$ takes a value from $\{0, \beta, 1-\ell\beta\}$, where $\ell \in \{0, \dots, m-1\}$. Consequently, any accumulated label $L_i = \sum_{v=1}^i \gamma_v$ at layer $V_{i}$ can be expressed in the form:
$L_i = l_1 + (l_2 - l_3)\beta$,
where $l_1$ is the number of times a value of the form $1-\ell_v\beta$, $v\in [i]$, was chosen, $l_2$ is the number of times $\beta$ was chosen, and $l_3 = \sum_{v\in [i]} \ell_v$, where the sum is taken over the $l_1$ selections. 
Clearly, $l_1 \in \{0, \dots, i\}$ and $l_2 \in \{0, \dots, i\}$. The number of possible values for $l_1$ is $O(i)$, and the number of possible values for $l_2 - l_3$ is $O(i \cdot m)$. 
Thus, the number of distinct node labels in any layer $V_{i}$ is $O(i^2m)$ and each layer can contain up to $O(n^2m)$ nodes.

For each node in layer $V_i$, where $0 \leq i \leq n-1$, with label~$L_{i}$ and for each assignment 
$\gamma \in \{0, \beta, 1-\ell\beta\}$,  $\ell \in \{0, \dots, m-1\}$, we create an arc to a node in layer $V_{i+1}$ with label $L_{i} + \gamma$. The cost of this arc is set to $f_{i+1}^\gamma$. This forward construction is applied iteratively up to layer $V_{n}$. Finally, the terminal layer $V_{n+1}$ consists only of the sink node~$t$ with a label of $r$. We introduce a zero-cost arc from a node in $V_{n}$ with label $L_{n}$ to~$t$ if and only if $L_n \geq r$.
A digraph construction is illustrated in Figure~\ref{fig2}.

\begin{figure*}[ht]
\centering
\includegraphics{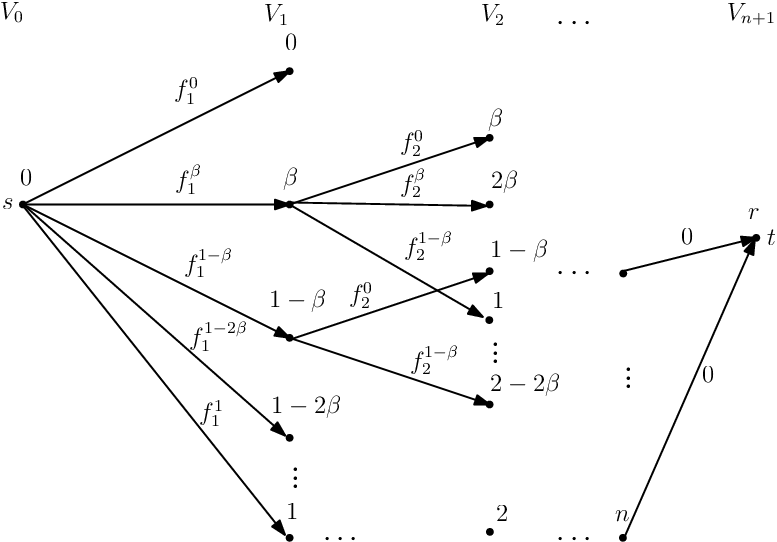}
\caption{Illustration of the layered digraph $G^\beta=(V,A)$ construction (for clarity, only some arcs in $A$ are shown).
Node label represent the accumulated values of the left-hand side of constraint~\eqref{mipf02} denoted by $L_i$. Arcs from $V_{i-1}$ to $V_{i}$ encode the local allocations at cost $f_i^\gamma$, while valid terminal arcs to the sink~$t$ are formed exclusively from nodes in $V_{n}$
with labels~$L_n$ satisfying $L_n\geq r$.} \label{fig2}
\end{figure*}

The size of $G^\beta = (V, A)$ is bounded as follows: 
the number of nodes in layer $V_i$ is $|V_{i}| = O(n^2 m)$, the total number of nodes is $|V| = O(n^3 m)$, 
the out-degree of any node in layer $V_i$ is $O(m)$,
and the total number of arcs is $|A| = O(n^3 m^2)$.

\begin{lemma}
\label{lsp}
There exists a feasible solution to \eqref{mipf0}--\eqref{mipf1} for a fixed $\beta$ with a total cost of at most $c^\beta + \beta \Gamma$ if and only if there exists an $s$-$t$ path $P$ in $G^\beta$ with a total cost of at most~$c^\beta$.
\end{lemma}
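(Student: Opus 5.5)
The proof is a two-direction correspondence between $s$-$t$ paths in $G^\beta$ and feasible solutions of \eqref{mipf0}--\eqref{mipf1} with $\beta$ fixed. Throughout, the term $\beta\Gamma$ is a constant added on both sides. The direction from a solution to a path additionally needs Lemma~\ref{lemvar}, so that the coordinates of $\pmb{y}$ can be assumed to lie in the grid $\{0,\beta,1-\ell\beta\}$.

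\emph{($\Leftarrow$)} Take an $s$-$t$ path $P$ of cost at most $c^\beta$. Its arc from $V_{i-1}$ to $V_i$ carries some value $\gamma_i$ and costs $f_i^{\gamma_i}$.
\begin{enumerate}[(1)]
\item For each $i$, set $x_{ij}=1$ for $j=j_i^{\gamma_i}$ and $x_{ij}=0$ otherwise, as in \eqref{fcost}.
\item Take an optimal solution $(\underline{y}_{ij},\overline{y}_{ij})$ of \eqref{mipc0}--\eqref{mipc3} with $p=j_i^{\gamma_i}$ and $\gamma=\gamma_i$. Put $y_{ij}=\underline{y}_{ij}+\overline{y}_{ij}$, $y_{ip}=\gamma_i$, and $\alpha_{ij}=\max\{0,y_{ij}-\beta\}$.
\item The constraints $\sum_j y_{ij}=1$, $y_{ij}\le\alpha_{ij}+\beta$ and $\alpha_{ij}\ge 0$ then hold by construction.
\item Because exactly one item per set has $x_{ij}=1$, we get $\sum_{i,j}x_{ij}y_{ij}=\sum_i\gamma_i=L_n$. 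The last arc into $t$ exists only if $L_n\ge r$, so \eqref{mipf02} holds.
\item The objective equals $\sum_i (C_{ip}+d_{ip}^{\gamma_i})+\beta\Gamma=\mathrm{cost}(P)+\beta\Gamma$, using the equivalence of \eqref{mipb0}--\eqref{mipb4} and \eqref{mipc0}--\eqref{mipc3}.
\end{enumerate}

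\emph{($\Rightarrow$)} Let $(\pmb{x},\pmb{y},\pmb{\alpha},\beta)$ be feasible with cost at most $c^\beta+\beta\Gamma$.
\begin{enumerate}[(1)]
\item Fix $\pmb{x}$ and $\beta$, and replace $(\pmb{y},\pmb{\alpha})$ by an optimal solution of the LP in $(\pmb{y},\pmb{\alpha})$ that minimizes \eqref{mipa0} with $\beta$ fixed. This does not increase the cost.
\item Choose this optimum at a vertex whose $y$-values lie in $\{0,\beta,1-\ell\beta\}$, as Lemma~\ref{lemvar} asserts.
\item Let $p_i$ be the item with $x_{ip_i}=1$ and set $\gamma_i=y_{ip_i}$, which is then an admissible arc value.
\item The cost of set $i$ is at least $C_{ip_i}+d_{ip_i}^{\gamma_i}\ge f_i^{\gamma_i}$, since the restriction of the solution to $\mathcal{T}_i$ is feasible for \eqref{mipb0}--\eqref{mipb4}.
\item The path through the labels $L_i=\sum_{v\le i}\gamma_v$ exists in $G^\beta$, and $L_n=\sum x_{ij}y_{ij}\ge r$ gives the final arc to $t$. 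Its cost is at most $\sum_i(C_{ip_i}+d_{ip_i}^{\gamma_i})\le c^\beta$.
\end{enumerate}

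\textbf{Main obstacle.} Step (2) of this direction is the hard part. Lemma~\ref{lemvar} characterizes vertices of the LP \eqref{mipa0}--\eqref{mipa10} with $\beta$ free, while here $\beta$ is fixed. I would try two routes.
\begin{itemize}
\item \emph{Direct argument.} Redo the Rank Lemma argument with $\beta$ treated as a constant. With $\beta$ fixed, the tight system for each set $i$ leaves at most one or two free $y$-variables per row, and the same case analysis gives $y_{ij}\in\{0,\beta,1-\ell\beta\}$. Case 2b, where one row has two free variables linked by \eqref{mipf02}, needs extra care.
\item \emph{Fallback.} Prove the statement only for the global optimum. Take an optimal vertex of the full LP for the optimal $\pmb{x}$; its $\beta$ lies in $\mathcal{B}$ and its $y$-values are on the grid, which is exactly what the algorithm's correctness requires.
\end{itemize}
I would state this reduction explicitly.
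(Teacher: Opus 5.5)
Your ($\Leftarrow$) direction and the skeleton of ($\Rightarrow$) coincide with the paper's proof. The paper's ($\Rightarrow$) simply asserts, ``by Lemma~\ref{lemvar}'', that $y_{ij'_i}\in\{0,\beta,1-\ell\beta\}$ for the given feasible solution. That is exactly the step you single out, and you are right to distrust it. Your direct route cannot close it, though. In Case~2b of Lemma~\ref{lemvar} the grid values come from a row $i_2$ with $\ell_{i_2}\beta=1$ (see \eqref{rsys2}). That row is present only because $\beta$ is itself a variable that must be pinned down. Once $\beta$ is a constant, no such row is needed. The free variable with $x_{i_1j_1}=1$ then equals $r-l_2-(l_1-l_3)\beta$ as in \eqref{yij1}, which need not be $0$, $\beta$ or $1-\ell\beta$ with $\ell\le m-1$.

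In fact ($\Rightarrow$) is false even for $\beta\in\mathcal{B}$. Take $n=2$, $m_1=m_2=4$, $r=1$ and $\beta=\tfrac16$, which lies in $\mathcal{B}$ via $l_1=0$, $l_2=2$, $l_3=6$. Set $C_{i1}=0$ and $C_{ij}=1000$ for $j\ge 2$. All nominal costs are $0$ except $\underline{c}_{24}=4$. Set $\Delta_{11}=12$, $\Delta_{12}=\Delta_{13}=\Delta_{14}=10$, $\Delta_{21}=5$ and $\Delta_{22}=\Delta_{23}=\Delta_{24}=1000$. Now take $x_{11}=x_{21}=1$, $(y_{11},y_{12},y_{13},y_{14})=(\tfrac13,\tfrac13,\tfrac16,\tfrac16)$, $(y_{21},y_{22},y_{23},y_{24})=(\tfrac23,\tfrac16,\tfrac16,0)$ and $\alpha_{ij}=\max\{0,y_{ij}-\beta\}$. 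This is feasible with cost $\tfrac{11}{3}+\tfrac52+\beta\Gamma=\tfrac{37}{6}+\beta\Gamma$. The arc values of $G^{1/6}$ are $\{0,\tfrac16,\tfrac12,\tfrac23,\tfrac56,1\}$; the value $\tfrac13=1-4\beta$ would need $\ell=4>m-1$. Enumerating pairs with $\gamma_1+\gamma_2\ge 1$ gives a shortest path of cost $f_1^{1/2}+f_2^{1/2}=4+\tfrac73=\tfrac{19}{3}>\tfrac{37}{6}$.

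So the lemma cannot be proved as stated. The sentence after it, claiming that $c^\beta+\beta\Gamma$ is the inner optimum for every $\beta\in\mathcal{B}$, is also wrong. Your fallback is the correct repair, and it should be the lemma rather than a backup:
\begin{enumerate}[(i)]
\item By your ($\Leftarrow$), $c^\beta+\beta\Gamma\ge\mathrm{OPT}$ for every $\beta\in\mathcal{B}$.
\item For an optimal $\pmb{x}^*$, the LP \eqref{mipa0}--\eqref{mipa10} has an optimal vertex, since its feasible region lies in the nonnegative orthant and its objective is nonnegative.
\item Lemma~\ref{lemvar} applies to that vertex with $\beta$ free. It gives $\beta^*\in\mathcal{B}$ together with grid values of $\pmb{y}^*$.
\item Your steps (3)--(5) then give a path in $G^{\beta^*}$ of cost at most $\mathrm{OPT}-\beta^*\Gamma$.
\end{enumerate}
Hence $\min_{\beta\in\mathcal{B}}(c^\beta+\beta\Gamma)=\mathrm{OPT}$, which is all the algorithm actually uses.
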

\begin{proof}
$(\Rightarrow)$ Assume there exists a feasible solution $(\pmb{x}, \pmb{y}, \pmb{\alpha})$ to \eqref{mipf0}--\eqref{mipf1} with total cost at most $c^\beta + \beta \Gamma$. For each set $\mathcal{T}_i$, $i \in [n]$, exactly one binary variable is active, say $x_{i j'_i} = 1$ for some $j'_i \in [m_i]$, and its corresponding continuous variable takes the value $y_{i j'_i} = \hat{\gamma}_i$. By Lemma~\ref{lemvar}, $\hat{\gamma} \in \{0, \beta, 1-\ell\beta\}$ for some $\ell_i \in \{0, \dots, m-1\}$. By the construction of $G^\beta$, there exists an arc in layer $i$ corresponding to allocation $\hat{\gamma}_i$ with cost $f_i^{\hat{\gamma}_i} \le C_{i j'_i} + d_{i j'_i}^{\hat{\gamma}_i}$. Furthermore, since constraint \eqref{mipf02} is satisfied, we have $\sum_{i=1}^n \hat{\gamma}_i \geq r$, which guarantees the existence of a zero-cost arc from the node in layer $V_{n}$ to the sink $t$. Concatenating these arcs yields an $s$-$t$ path $P$ in $G^\beta$ whose total cost satisfies $\sum_{i=1}^n f_i^{\hat{\gamma}_i} \leq c^\beta$.

$(\Leftarrow)$ Let $P$ be an $s$-$t$ path in $G^\beta$ with a total cost of~$c^\beta$. Let $j_1^{\gamma_1}, \dots, j_n^{\gamma_n}$ be the item indices associated with the arcs along this path (see~\eqref{fcost}), where $\gamma_i \in \{0, \beta, 1-\ell\beta\}$ with $\ell \in \{0, \dots, m-1\}$ for each $i \in [n]$. We construct a feasible solution to \eqref{mipf0}--\eqref{mipf1} by fixing $x_{i j_i^{\gamma_i}} = 1$ and $y_{i j_i^{\gamma_i}} = \gamma_i$ for all $i \in [n]$. The remaining continuous variables in each set $\mathcal{T}_i$ are optimally allocated by solving the subproblem \eqref{mipb0}--\eqref{mipb4}. Since $P$ is a valid $s$-$t$ path terminating at node $t$, the node labels satisfy $L_n=\sum_{i=1}^n \gamma_i \geq r$, ensuring that the cardinality constraint \eqref{mipf02} holds. By the definition of the arc costs $f_i^{\gamma_i}$, the objective function value of this constructed solution is precisely $\sum_{i=1}^n f_i^{\gamma_i} + \beta \Gamma = c^\beta + \beta \Gamma$.
\end{proof}

We are now ready to summarize the complete algorithm for solving the recoverable robust representative selection problem \eqref{mipf0}--\eqref{mipf1}. For each $\beta \in \mathcal{B}$, we construct the corresponding layered digraph $G^\beta$ and compute a shortest $s$-$t$ path in $G^\beta$. Let $c^\beta$ denote the total cost of this shortest path. 
By Lemma~\ref{lsp}, $c^\beta + \beta\Gamma$ is the optimal value of the objective function of the inner model \eqref{mipf0}--\eqref{mipf1a}.
Finally, we select an optimal parameter value:
$$\beta^* \in \operatorname*{argmin}_{\beta \in \mathcal{B}} \{c^\beta + \beta\Gamma\}$$
and retrieve an optimal solution to the inner model \eqref{mipf0}--\eqref{mipf1a} corresponding to $\beta^*$, which is an optimal solution to \eqref{mipf0}--\eqref{mipf1} and thus to the RR-RS problem.

We now analyze the running time of the algorithm. The digraph $G^\beta$ has $O(n^3 m)$ nodes and $O(n^3 m^2)$ arcs. We can compute the costs $f_i^{\gamma}$ and indices $j_{i}^{\gamma}$ for all possible assignments $\gamma$ (see \eqref{fcost}) in $O(m^2)$ time for each $i \in [n]$. Hence, all costs $f_{i}^{\gamma}$ and the corresponding indices $j_{i}^{\gamma}$ in $G^\beta$ can be found in $O(n m^2)$ time. 
It follows that $G^\beta$ together with all arc costs can be constructed in $O(n^3 m^2 + n m^2)$ time. Since $G^\beta$ is an acyclic digraph, computing a shortest path takes $O(|V| + |A|) = O(n^3 m^2)$ time (see, e.g.,~\cite[Section 7]{KV12}). 
Finally, we have to construct~$G^\beta$, compute arc cost and  
solve the shortest path problem on~$G^\beta$  for each candidate value for $\beta\in \mathcal{B}$. Thus, the overall running time of the algorithm is $O(|\mathcal{B}|(n^3 m^2 + n m^2))$, where $|\mathcal{B}| = O(n^2 m)$, which gives a total time complexity of $O(n^5 m^3)$.
We have thus proved the following theorem.
\begin{theorem}
There exists a strongly polynomial-time algorithm for the RR-RS problem with
the element-inclusion, element-exclusion and element-symmetric difference neighborhoods. 
\end{theorem}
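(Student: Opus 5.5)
The plan is to assemble the results already established into a single argument, starting from the reduction of the neighborhoods. We first recall from Section~1 that the element-inclusion and element-exclusion neighborhoods coincide with $\mathbb{X}(\pmb{x},r)$ for $r=n-k$. The element-symmetric difference neighborhood coincides with it for $r=n-2k$, truncated at $0$ if $n-2k<0$. So it suffices to solve RR-RS~\eqref{rrrs1} with the neighborhood $\mathbb{X}(\pmb{x},r)$ for an arbitrary integer $r\in\{0,\dots,n\}$.

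Next we justify that~\eqref{rrrs1} is equivalent to the MIP~\eqref{mipf0}--\eqref{mipf1}. This rests on three facts from Section~2:
\begin{enumerate}[(i)]
\item the total unimodularity of the incremental problem for fixed $\pmb{x}$;
\item the minimax interchange, which is valid because $\mathcal{U}(\Gamma)$ and $\mathbb{X}'(\pmb{x},r)$ are compact convex sets and the objective is bilinear;
\item LP duality for the inner maximization.
\end{enumerate}
For each fixed $\pmb{x}$, the remaining LP~\eqref{mipa0}--\eqref{mipa10} has a bounded optimum attained at a vertex. Boundedness holds because the objective is nonnegative and the feasible region is nonempty. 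Moreover $\beta\le 1$ may be assumed, since lowering $\beta$ to $1$ keeps feasibility (using $y_{ij}\le 1$) and does not increase the cost.

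By Lemma~\ref{lemvar}, some optimal vertex then has $\beta$ in the finite set $\mathcal{B}$ given by~\eqref{beta} and every $y_{ij}$ of the form~\eqref{casey1}. Minimizing over $\pmb{x}\in\mathbb{X}$, the optimum of~\eqref{mipf0}--\eqref{mipf1} equals $\min_{\beta\in\mathcal{B}}$ of the optimum of the inner model with $\beta$ fixed and $y$-values restricted to~\eqref{casey1}.

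For each fixed $\beta\in\mathcal{B}$ we then invoke Lemma~\ref{lsp}: the optimum of the inner model is $c^\beta+\beta\Gamma$, where $c^\beta$ is the shortest $s$-$t$ path cost in the acyclic digraph $G^\beta$. The algorithm therefore enumerates $\beta\in\mathcal{B}$, builds $G^\beta$ with arc costs $f_i^\gamma$ computed from the continuous knapsack subproblems~\eqref{mipc0}--\eqref{mipc3}, solves a shortest path problem in the DAG, and returns the best value. The corresponding $\pmb{x}$ is read off from the indices $j_i^{\gamma}$.

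The step I expect to be the main obstacle is the strongly polynomial bound, in particular the count of node labels. A label has the form $l_1+(l_2-l_3)\beta$ with $O(n)$ choices of $l_1$ and $O(nm)$ choices of $l_2-l_3$, so each layer has $O(n^2m)$ labels. This count depends only on $n$ and $m$ and not on the numeric data; coincident labels are merged by exact rational comparison. The candidate set satisfies $|\mathcal{B}|=O(n^2m)$ by counting triples $(l_1,l_2,l_3)$ up to the value of $l_1-l_3$. Every arithmetic operation is an addition or comparison of input numbers or of rationals with small integer numerators and denominators. The total count of operations is $O(|\mathcal{B}|\,n^3m^2)=O(n^5m^3)$, which is polynomial in $N\ge\max\{n,m\}$ and independent of the magnitudes of $\pmb{C},\underline{\pmb{c}},\pmb{\Delta},\Gamma$. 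This gives strong polynomiality and proves the theorem.
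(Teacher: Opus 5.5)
Your proposal is correct and follows the paper's own proof. You reduce the neighborhoods to $\mathbb{X}(\pmb{x},r)$, pass to the MIP \eqref{mipf0}--\eqref{mipf1}, restrict $\beta$ to $\mathcal{B}$ via an optimal vertex and Lemma~\ref{lemvar}, solve each fixed $\beta$ as a shortest path in $G^\beta$ (Lemma~\ref{lsp}), and count $O(|\mathcal{B}|\,n^3m^2)=O(n^5m^3)$ operations. Your explicit choice of an optimal \emph{vertex} for the optimal $\pmb{x}$ is slightly more careful than the paper, which applies Lemma~\ref{lemvar} to an arbitrary feasible solution in Lemma~\ref{lsp}.

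One small slip is copied from Section~1. If the symmetric-difference neighborhood means $\|\pmb{x}-\pmb{y}\|_1\le k$, then two selections that differ in $d$ sets satisfy $\|\pmb{x}-\pmb{y}\|_1=2d$. So the correct parameter is $r=\max\{0,\,n-\lfloor k/2\rfloor\}$, not $n-2k$. This does not affect the theorem, because every $r\in\{0,\dots,n\}$ is handled.
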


\section{The RR-RS problem without bounds on the second-stage costs}
In this section, we investigate a special case of the RR-RS problem where $\Delta_{ij} = \infty$ for all $i \in [n]$ and $j \in [m_i]$. Consequently, no upper bound is imposed on the second-stage cost increases. We show that this assumption enables us to decrease  the running time of an algorithm for this problem. First, note that model~\eqref{mipf0}--\eqref{mipf1} then reduces to the following formulation:
\begin{small}
\begin{align}
	\min& \sum_{i\in [n]}\sum_{j\in [m_i]} (C_{ij} x_{ij}+ \underline{c}_{ij}y_{ij})+\beta\Gamma&& \label{mipfu0} \\ 
	\text{s.t. } & \sum_{j\in [m_i]} x_{ij}=1, &&  i\in [n] \\
	& \sum_{j\in [m_i]} y_{ij}=1, &&   i\in [n] \label{mipfu001}\\ 
	& \sum_{i\in [n]} \sum_{j\in [m_i]} x_{ij} y_{ij} \geq r && \label{mipfu02}\\
	&  0\leq y_{ij}\leq \beta, &&    i\in [n], j\in [m_i] \\
	& x_{ij}\in \{0,1\},  &&  i\in [n], j\in [m_i]\label{mipfu1a}\\
	&\beta \geq 0.&&\label{mipfu1} 
\end{align}
\end{small}
and the corresponding adversarial problem, for a fixed first-stage solution~$\pmb{x}$, becomes
\begin{small}
\begin{align}
	\min\quad & \sum_{i\in [n]}\sum_{j\in [m_i]} \underline{c}_{ij}y_{ij}+\beta\Gamma \label{mipau0} \\ 
	\text{s.t.} \quad 
	& \sum_{j\in [m_i]} y_{ij}=1, &&   i\in [n] \label{mipau001}\\ 
	& \sum_{i\in [n]} \sum_{j\in [m_i]} x_{ij} y_{ij} \geq r && \label{mipau02}\\
	&  y_{ij}\leq \beta, &&   i\in [n], j\in [m_i] \label{mipau1a}\\
	&  y_{ij}\geq 0, &&   i\in [n], j\in [m_i] \label{mipau2a}\\
	&\beta \geq 0.&&\label{mipau1} 
\end{align}
\end{small}
We now provide a structural characterization of the vertex solutions of the LP problem (\ref{mipau0})--(\ref{mipau1}). 
\begin{lemma}
\label{lemvaru}
Let $(\pmb{y},\beta)$ be any vertex solution of the LP problem (\ref{mipau0})--(\ref{mipau1}). Then the possible values of the variables $y_{ij}$ and $\beta$ are characterized as follows:
\begin{align}
    &y_{ij} \in \{0, \beta, 1-\ell^\beta\beta \},  \quad \ell^\beta= \left\lfloor \frac{1}{\beta} \right\rfloor, \label{caseyu1}\\
    &\beta \in \left\{1, \frac{1}{2}, \dots, \frac{1}{\underline{m}},0\right\} \quad \text{or} \quad \beta = \frac{r-l_2}{l_1-l_3},  \label{betau}
\end{align}
where $l_1, l_2 \in \{0, \dots, n\}$, $l_1 + l_2 \leq n$, $l_3\in \{0, \dots,  l_2\underline{m}\}$, $\underline{m}=\min_{i\in[n]}m_i$.
\end{lemma}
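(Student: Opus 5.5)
We will mirror the proof of Lemma~\ref{lemvar}, adapting it to LP~\eqref{mipau0}--\eqref{mipau1}, which has no $\pmb{\alpha}$ variables.

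\emph{The case $\beta=0$.} Constraint~\eqref{mipau1a} then forces $\pmb{y}=\pmb{0}$, contradicting~\eqref{mipau001}. So $\beta=0$ cannot occur at a feasible point, and we assume $\beta>0$. Constraint~\eqref{mipau001} together with~\eqref{mipau1a} gives $m_i\beta\ge 1$ for every $i$. Hence $\beta\ge 1/\underline{m}$ at every feasible point; this is where $\underline{m}$ replaces $m$.

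\emph{Tight system.} We apply the Rank Lemma exactly as in Lemma~\ref{lrank}, with $\mathcal{L}=\emptyset$. This gives $|\mathcal{Y}|+1=n+|W'|$, where $W'$ consists of linearly independent tight constraints of the form $y_{ij}=\beta$, possibly together with~\eqref{mipau02}.

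\emph{Case 1: \eqref{mipau02} is not in $W'$.} The argument of Lemma~\ref{lemvar} applies verbatim. Exactly one block $i'$ has all its positive variables equal to $\beta$, so $\beta=1/\ell_{i'}$ with $\ell_{i'}\le m_{i'}$. Every other block has $\ell_i$ variables equal to $\beta$ and one free variable with value $1-\ell_i\beta\in(0,\beta]$.

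The new ingredient is the bound $y_{ij}\le\beta$ on the free variable, which the earlier lemma did not have. Combined with positivity, it gives $1/\beta-1\le \ell_i<1/\beta$, so $\ell_i=\lceil 1/\beta\rceil-1$. When $1/\beta$ is not an integer, this equals $\lfloor 1/\beta\rfloor=\ell^\beta$, which yields~\eqref{caseyu1}. When $1/\beta$ is an integer, the free value equals $\beta$ itself, which is already in the set.

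Since $\beta=1/\ell_{i'}\ge 1/\underline{m}$ by the feasibility bound, we get $\ell_{i'}\le\underline{m}$. This gives the first alternative in~\eqref{betau}.

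\emph{Case 2a: \eqref{mipau02} is in $W'$ and every block has exactly one free variable.} The values $y_{ij}$ are as in Case~1. The same $\lceil\cdot\rceil$ argument pins each $\ell_i$ to the common value $\ell^\beta$, with the integer case $1/\beta\in\mathbb{Z}$ absorbed as before. Substituting into~\eqref{mipau02} gives $(l_1-l_3)\beta=r-l_2$, where $l_3$ is a sum of at most $l_2$ terms, each equal to $\ell^\beta\le\lfloor 1/\beta\rfloor\le\underline{m}$. Hence $l_3\in\{0,\dots,l_2\underline{m}\}$. Linear independence excludes $l_1=l_3$, so $\beta=(r-l_2)/(l_1-l_3)$.

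\emph{Case 2b: \eqref{mipau02} is in $W'$ and some block has two free variables.} As before, there is a block $i_2$ with zero free variables, which forces $\beta=1/\ell_{i_2}$, and the two free variables in block $i_1$ satisfy $x_{i_1j_1}\ne x_{i_1j_2}$. Both lie in $(0,\beta]$ and have the form $1-k\beta$ with $k$ an integer, by the computation~\eqref{ey12}--\eqref{ey13}. Since $1/\beta=\ell_{i_2}$ is an integer, the bound $1-k\beta\le\beta$ with positivity forces $1-k\beta\in\{\beta\}$. So every value lies in $\{0,\beta\}$, and $\beta$ lies in the first family with $\ell_{i_2}\le\underline{m}$.

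\emph{Main obstacle.} The delicate step is the ceiling/floor bookkeeping. When $1/\beta$ is an integer, the expression $1-\lfloor 1/\beta\rfloor\beta$ equals $0$, so the free value $\beta$ must be accounted for separately as $\beta$, not as $1-\ell^\beta\beta$. The same care is needed when bounding $l_3$ by $l_2\underline{m}$ in Case~2a, so that the reduction to $\underline{m}$ is justified in every case.
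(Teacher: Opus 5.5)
Your proposal is correct and follows the paper's proof: the Rank Lemma, the case split into Case~1, Case~2a and Case~2b, the bound $y_{ij}\le\beta$ pinning every free variable to a common $\ell_i$, and $\beta\ge 1/\underline{m}$ supplying the $\underline{m}$-bounds. Your version is in fact slightly more careful than the paper's in two places. You observe that $\beta=0$ is simply infeasible here, whereas the paper defers to the TUM argument of Lemma~\ref{lemvar}. You also note that when $1/\beta$ is an integer the free variable equals $\beta$ rather than $1-\lfloor 1/\beta\rfloor\beta=0$; this is always the situation in Cases~1 and~2b, and the paper's claims $\ell_i=\lfloor 1/\beta\rfloor$ and $y_{i_1j_2}=1-\ell^\beta\beta$ gloss over it.
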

\begin{proof}
Model~\eqref{mipau0}--\eqref{mipau1} is a special case of model~\eqref{mipa0}--\eqref{mipa10} with $y_{ij} \le \beta$
($\alpha_{ij}=0$, $i\in[n]$, $j\in[m_i]$). Therefore, the proof is similar to the one of Lemma~\ref{lemvar}.
Let $(\pmb{y}, \beta)$ be an arbitrary vertex solution to LP~\eqref{mipau0}--\eqref{mipau1}.
Assume that $1 \ge \beta > 0$.
By Lemma~\ref{lrank},
the positive components of $(\pmb{y},  \beta)$ are the unique solution to the tight, square, and nonsingular
 linear system~$\mathcal{E}$ of size $(|\mathcal{Y}| + 1) \times (n + |W'|)$, satisfying $|\mathcal{Y}| + 1 = n + |W'|$,
 here $\mathcal{L}=\emptyset$.
 We recall that $W'_{\pmb{y}}\subseteq W'$ is the set of tight bound constraints of the form $y_{ij} = \beta$.
 Below, we show how to explicitly determine the values of $(\pmb{y}, \beta)$. 
 It is clear that Observation~\ref{obsbeta} remains true for the considered case.
 Observe also that $\beta\geq \frac{1}{\underline{m}}$ due to the  equality constraints~\eqref{mipau001} and
 the fact that $y_{ij}\leq \beta$. We need to consider only two cases.

\emph{Case 1: Constraint \eqref{mipau02} is not in $W'$.}
In this case, $W'=W'_{\pmb{y}}$. Thus $|\mathcal{Y}|-|W'_{\pmb{y}}|=n-1$,
which means that exactly $n - 1$ variables in $\mathcal{Y}$ are unfixed by bound constraints in $W'_{\pmb{y}}$.
Furthermore, there is 
 \emph{exactly one} equality constraint~$i'$ in which all variables are fixed by the  bound constraints $W'_{\pmb{y}}$ and
 each of $n-1$ remaining equality constraints $i \neq i'$ contains \emph{exactly one} variable in $\mathcal{Y}$ unfixed by the bound constraints in $W'_{\pmb{y}}$ (see Case 1 in the proof of  Lemma~\ref{lemvar}).
 
Substituting the variables fixed  by the constraints in $W'_{\pmb{y}}$  into equality constraint~$i'$
yields:
$\ell_{i'} \beta = 1 \implies \beta = \frac{1}{\ell_{i'} }$,
where $\ell_{i'}  \in \{1, \dots, \underline{m}\}$ is the number of fixed variables in constraint~$i'$. This proves~\eqref{betau}.

We now determine values of the remaining $n-1$ variables $y_{ij}$.
Consider $i \neq i'$ containing an unfixed  variable~$y_{ij}$. 
Constraint~$i$ then yields $y_{ij} + \ell_i \beta = 1 \implies y_{ij} = 1 - \ell_i \beta$.
Since $0 < y_{ij} \leq\beta$, substituting $y_{ij} = 1 - \ell_i \beta$ into this bounds inequality gives $0 < 1 - \ell_i \beta \leq \beta \iff \ell_i < \frac{1}{\beta} \leq \ell_i + 1$, which determines $\ell_i = \left\lfloor \frac{1}{\beta} \right\rfloor = \ell^\beta$. Thus, $y_{ij} = 1 - \ell^\beta \beta$.
 Accordingly, every variable in $\mathcal{Y}$ satisfies $y_{ij} \in \{\beta, 1 - \ell^\beta \beta\}$. Together with $y_{ij} = 0$ for $y_{ij} \notin \mathcal{Y}$, this completes the proof of~\eqref{caseyu1}.

\emph{Case 2: Constraint \eqref{mipau02} is in $W'$.}
In this scenario, the cardinality constraint~\eqref{mipau02} is tight and belongs to the linearly independent system $\mathcal{E}$,
 $W' = W'_{\pmb{y}} \cup \{\eqref{mipau02}\}$. Applying Lemma~\ref{lrank}(\ref{lranki}) gives $|\mathcal{Y}| - |W'_{\pmb{y}}| = n$.
This indicates that exactly $n$ variables from $\mathcal{Y}$ are unfixed by  the bound constraints in $W'_{\pmb{y}}$. 
We investigate  the values of these $n$ variables and the variable $\beta$ across the $n$  equality constraints~\eqref{mipau001}
and the cardinality constraint~\eqref{mipau02}.  Two subcases must be considered.

\emph{Case 2a:} Each  equality constraint  contains \emph{exactly one} variable in $\mathcal{Y}$ unfixed by $W'_{\pmb{y}}$.
Therefore,
 the structural characterization of $y_{ij}$ is  the same as in Case~1: every variable in $\mathcal{Y}$ satisfies
 $y_{ij} \in \{\beta, 1 - \ell^\beta \beta\}$.
 Substituting these values into the tight constraint~\eqref{mipau02} 
 with fixed $x_{ij} \in \{0,1\}$ gives
\[
l_1 \beta + l_2 (1 - \ell^\beta \beta) = r \implies (l_1 - l_3)\beta = r - l_2,
\]
where $l_1, l_2 \in \{0, \dots, n\}$ with $l_1 + l_2 \le n$ and $l_3 = l_2 \ell^\beta$.
Noting that $\ell^\beta \leq \underline{m}$ implies $l_3 \in \{0, \dots, l_2 \underline{m}\}$.
If $l_1 = l_3$, then the system $\mathcal{E}$ is either infeasible (if $r-l_2 \neq 0$) or linearly dependent (if $r-l_2 = 0$).
This contradicts Lemma~\ref{lrank}(\ref{lrankii}),~(\ref{lrankiii}). Thus, we have:
$$\beta = \frac{r - l_2}{l_1 - l_3},$$
which proves formula \eqref{betau}.

\emph{Case 2b:}  Not every equality constraint  contains exactly one variable in $\mathcal{Y}$ unfixed by $W'_{\pmb{y}}$.
In this case the reasoning is the same as in Case~2b in the proof of  Lemma~\ref{lemvar}.
Namely, 
there exists \emph{exactly one} constraint $i_2 \in [n]$ with zero unfixed variables and \emph{exactly one} constraint $i_1 \in [n]$ with two unfixed variables, say $y_{i_1 j_1}, y_{i_1 j_2} \in \mathcal{Y}$. The remaining $n-2$ constraints each contain exactly one unfixed variable, characterized as in Case~1. Now they are fixed.
Consequently, the tight linear system reduces to the one analogous to~\eqref{rsys1}--\eqref{rsys3}. Without loss of generality, assume $x_{i_1 j_1} = 1$ and $x_{i_1 j_2} = 0$.Because $y_{i_1 j_1} = M\beta$ (see ~\eqref{ey11}), $y_{i_1j_1}\leq \beta$ and $M$ is a positive integer, we get $M=1$. Therefore, $y_{i_1j_1}=\beta$ and $y_{i_1j_2}=1-\ell^\beta\beta$.
This completes the proof of~\eqref{caseyu1} and~\eqref{betau} for $1 \ge \beta > 0$.

The proof of the case, where $\beta=0$ is the same as in the proof of  Lemma~\ref{lemvar}.
\end{proof}

The algorithmic framework for solving the RR-RS problem without bounds on the second-stage costs mimics that of the bounded case. 
The core approach is to solve the inner model (\ref{mipfu0})-(\ref{mipfu1a})  for a fixed $\beta \in [0,1]$ and retrieve an optimal solution to model (\ref{mipfu0})-(\ref{mipfu1a}) corresponding to a value of $\beta$ that minimizes the objective~(\ref{mipfu0}). 
 By Lemma~\ref{lemvaru}, it suffices to consider the discrete set of candidate values specified in \eqref{betau}. 
For each candidate value of $\beta$ from Lemma~\ref{lemvaru}, we construct a layered graph $G^\beta$ and solve a shortest path problem. However, the size of $G^\beta$ is now substantially reduced. Specifically, each layer $V_i$ contains up to $O(n^2)$ nodes, resulting in a total of $O(n^3)$ nodes in $G^\beta$. Since at most three arcs leave each node, $G^\beta$ also contains $O(n^3)$ arcs. Given that the number of candidate values for $\beta$ is $O(n^2\underline{m})$, this leads
to an algorithm with the running time 
  of $O(n^5\underline{m})$.





\end{document}